\newtheorem{theorem}{Theorem}[section]
\newtheorem{lemma}[theorem]{Lemma}
\newtheorem{proposition}[theorem]{Proposition}
\newtheorem{corollary}[theorem]{Corollary}
\newenvironment{proof}[1][Proof]{\begin{trivlist}
\item[\hskip \labelsep {\bfseries #1}]}{\end{trivlist}}
\newenvironment{definition}[1][Definition]{\begin{trivlist}
\item[\hskip \labelsep {\bfseries #1}]}{\end{trivlist}}
\newcommand{\qed}{\nobreak \ifvmode \relax \else
      \ifdim\lastskip<1.5em \hskip-\lastskip
      \hskip1.5em plus0em minus0.5em \fi \nobreak
      \vrule height0.75em width0.5em depth0.25em\fi}
\begin{document}

\preprint{}

\title{Generalized Two-Qubit Whole and Half Hilbert-Schmidt Separability Probabilities} 
\author{Paul B. Slater}
 \email{slater@kitp.ucsb.edu}
\affiliation{%
University of California, Santa Barbara, CA 93106-4030\\
}
\author{Charles F. Dunkl}
 \email{cfd5z@virginia.edu}
\affiliation{Department of Mathematics, University of Virginia,
Charlottesville, VA 22904-4137}%
\date{\today}

\begin{abstract}
Compelling evidence--though yet no formal proof--has been adduced that the probability that a generic (standard) 
two-qubit state ($\rho$) is separable/disentangled is 
$\frac{8}{33}$ (arXiv:1301.6617, arXiv:1109.2560, arXiv:0704.3723). Proceeding in related analytical frameworks, using a further determinantal $4F3$-hypergeometric moment formula  (Appendix A), we reach, {\it via} density-approximation (inverse) procedures, the conclusion that one-half ($\frac{4}{33}$) of this probability arises when the determinantal inequality $|\rho^{PT}|>|\rho|$, where $PT$ denotes the partial transpose, is satisfied, and, the other half,  when $|\rho|>|\rho^{PT}|$. These probabilities are taken with respect to the flat, Hilbert-Schmidt measure on the fifteen-dimensional convex set of $4 \times 4$ density matrices. We find fully parallel bisection/equipartition results for the previously adduced, as well,  two-``re[al]bit'' and two``quater[nionic]bit'' separability probabilities of $\frac{29}{64}$ and 
$\frac{26}{323}$, respectively. The new determinantal $4F3$-hypergeometric moment formula is, then, adjusted (Appendices B and C) to the boundary case of minimally degenerate states 
($|\rho|=0$), and its consistency manifested--also using density-approximation--with an important theorem of Szarek, Bengtsson and {\.Z}yczkowski (arXiv:quant-ph/0509008). This theorem states that the Hilbert-Schmidt separability probabilities of generic minimally degenerate two-qubit states are (again) one-half those of the corresponding generic nondegenerate states.
\end{abstract}

\pacs{Valid PACS 03.67.Mn, 02.30.Zz, 02.50.Cw, 02.40.Ft}
\keywords{$2 \times 2$ quantum systems, entanglement  probability distribution moments,
probability distribution reconstruction, Peres-Horodecki conditions,  partial transpose, determinant of partial transpose, two qubits, two rebits, Hilbert-Schmidt measure,  moments, separability probabilities,  determinantal moments, inverse problems, random matrix theory, generalized two-qubit systems, hypergeometric functions}

\maketitle
\section{Introduction}
The problem of determining the probability that a bipartite/multipartite quantum state of  a certain random nature exhibits a particular entanglement characteristic  is clearly of intrinsic ``philosophical, practical, physical'' \cite{ZHSL}) interest \cite{ZHSL,simon,BHN,sbz,ingemarkarol,Bhosale,McNulty}. We have reported \cite{MomentBased,slaterJModPhys} major advances, in this regard, with respect to the ``separability/disentanglement probability'' of generalized two-qubit states (representable by $4 \times 4$ density matrices $\rho$), endowed with the flat, Hilbert-Schmidt  measure \cite{szHS,ingemarkarol}. 
Most noteworthy, a concise formula \cite[eqs. (1)-(3)]{slaterJModPhys} 
\begin{equation} \label{Hou1}
P(\alpha) =\Sigma_{i=0}^\infty f(\alpha+i),
\end{equation}
where
\begin{equation} \label{Hou2}
f(\alpha) = P(\alpha)-P(\alpha +1) = \frac{ q(\alpha) 2^{-4 \alpha -6} \Gamma{(3 \alpha +\frac{5}{2})} \Gamma{(5 \alpha +2})}{3 \Gamma{(\alpha +1)} \Gamma{(2 \alpha +3)} 
\Gamma{(5 \alpha +\frac{13}{2})}},
\end{equation}
and
\begin{equation} \label{Hou3}
q(\alpha) = 185000 \alpha ^5+779750 \alpha ^4+1289125 \alpha ^3+1042015 \alpha ^2+410694 \alpha +63000 = 
\end{equation}
\begin{displaymath}
\alpha  \bigg(5 \alpha  \Big(25 \alpha  \big(2 \alpha  (740 \alpha
   +3119)+10313\big)+208403\Big)+410694\bigg)+63000
\end{displaymath}
has emerged that yields for a given 
$\alpha$, where $\alpha$ is a random-matrix-Dyson-like-index \cite{dumitriu,caselle}, the corresponding Hilbert-Schmidt separability probability $P(\alpha)$. 
The setting $\alpha=1$ pertains to the fifteen-dimensional convex set of (standard/conventional, off-diagonal {\it complex}-entries) two-qubit density ($4 \times 4$ Hermitian, unit-trace, positive-semidefinite) matrices. 

The succinct formula yields (to arbitrarily high numerical precision) $P(1) =\frac{8}{33}$ (cf. \cite{slater833}, \cite[eq. B7]{joynt}, \cite[sec. VII]{Fonseca-Romero}). It is interesting to note that in this standard quantum-mechanical case \cite{steve}, the probability seems of a somewhat simpler nature (smaller numerators and denominators) than the value $P(\frac{1}{2}) =\frac{29}{64}$ obtained for the (``attractive toy model'' \cite{sbz}) nine-dimensional convex set of $4 \times 4$ (two-``rebit'') density matrices with {\it real} entries \cite{carl}, or, the value  $P(2) = \frac{26}{323}$ derived for the 
twenty-seven-dimensional convex set of $4 \times 4$ (two-``quaterbit'' \cite{batle2}) density matrices with {\it quaternionic} entries \cite{asher2,adler}.
(Let us note that $P(\frac{3}{2}) = \frac{36061}{262144}$ \cite[p. 9]{slaterJModPhys}. However, unlike the results for $\alpha = \frac{1}{2}, 1$ and 2, we have not been able to obtain this value through
direct density-matrix calculations. This disparity may be attributable to the proposition that the only associative real division algebras are the real numbers, complex numbers, and quaternions \cite{May}.)

Fei and Joynt \cite{FeiJoynt} have recently found strong support for these three primary conjectures by Monte Carlo sampling, using the extraordinarily large number of $5 \times 10^{11}$ points for each of the three cases  (cf. \cite[eq. (30)]{MiSt}, \cite{KhRo}).
\subsection{Multi-step derivation of concise formula} 
These simple rational-valued $\alpha$-parameterized separability probabilities and the formula $P(\alpha)$ above that yields them were obtained through a number of distinct steps of analysis. First, based on extensive computations (employing Cholesky matrix decompositions/parameterizations, Dirichlet measure and integration over spheres), we inferred the (yet formally unproven) determinantal-moment formula \cite[p. 30]{MomentBased} (cf. \cite[eq. (28)]{zozor} \cite{wan})
\begin{gather*} \label{nequalzero}
\left\langle \left\vert \rho^{PT}\right\vert ^{n}\right\rangle =\frac
{n!\left(  \alpha+1\right)  _{n}\left(  2\alpha+1\right)  _{n}}{2^{6n}\left(
3\alpha+\frac{3}{2}\right)  _{n}\left(  6\alpha+\frac{5}{2}\right)  _{2n}}\\
+\frac{\left(  -2n-1-5\alpha\right)  _{n}\left(  \alpha\right)  _{n}\left(
\alpha+\frac{1}{2}\right)  _{n}}{2^{4n}\left(  3\alpha+\frac{3}{2}\right)
_{n}\left(  6\alpha+\frac{5}{2}\right)  _{2n}}~_{5}F_{4}\left(
\genfrac{}{}{0pt}{}{-\frac{n-2}{2},-\frac{n-1}{2},-n,\alpha+1,2\alpha
+1}{1-n,n+2+5\alpha,1-n-\alpha,\frac{1}{2}-n-\alpha}%
;1\right) .
\end{gather*}
The brackets here denote expectation with respect to Hilbert-Schmidt (Euclidean) measure, while $5F4$ indicates a particular generalized hypergeometric function. The partial transpose of $\rho$, obtainable by transposing in place its four $2 \times 2$ blocks, is denoted by 
$\rho^{PT}$.

The first 7,501 of these moments ($n =0,1,\ldots 7500$) were employed as input to a Mathematica program of Provost \cite[pp. 19-20]{Provost}, implementing a Legendre-polynomial-based-density-approximation routine. From the high-precision, exact-arithmetic results obtained, we were able to formulate highly convincing, well-fitting conjectures (including the above-mentioned $\frac{8}{33}$ for $\alpha =1$) as to  underlying simple rational-valued separability probabilities. Then, with the use of the Mathematica FindSequenceFunction command applied to the sequence ($\alpha =1, 2,\ldots,32$)--or, fully equivalently, 
$\alpha =\frac{1}{2},\ldots \frac{63}{2}$--of these conjectures, and simplifying manipulations of the lengthy Mathematica result generated, we derived a multi-term $7F6$ hypergeometric-based formula \cite[Fig. 3]{slaterJModPhys} (cf. \cite[eq. (11)]{karolkarol}), with argument $\frac{27}{64}= (\frac{3}{4})^3$, for the conjectured values.  Then, Qing-Hu Hou (private communication) applied  a highly celebrated (``creative telescoping'') algorithm of Zeilberger \cite{doron} to this $7F6$-based expression to obtain the concise separability probability formula ((\ref{Hou1})-(\ref{Hou3})) for $P(\alpha)$ itself \cite[Figs. 5, 6]{slaterJModPhys}.
\subsection{General remarks}
Let us note that although the extensive symbolic and numeric computations conducted throughout this broad research project, have not furnished the rigorous proofs we, of course, strongly desire, they have been central to the testing of different approaches, and to the advancement of the specific 
determinantal-moment conjectures used for separability-probability evaluation. The conjectures take the form of equations asserted to hold for infinite ranges of parameter values, which can be verified for specific values of these parameters by symbolic computation.

Parallel programs to this one are being pursued in which: (1) the  theoretically-important Bures (minimal monotone) measure 
\cite{szBures,ingemarkarol,samcarl}--rather than the Hilbert-Schmidt one--is applied to the $4 \times 4$ density matrices; and (2) the $6 \times 6$ (qubit-qutrit) systems are studied with the Hilbert-Schmidt measure appropriate to them. Considerably less progress has so far been achieved in these areas.
No {\it general} moment formulas have yet been advanced, with explicit specific moment calculations having been implemented for the real and complex density matrices, so far for $\alpha = \frac{1}{2}$ and $\alpha =1$ for small values of 
$n$ \cite[sec. 6]{MomentBased} \cite{BuresHilbert,Hybrid}.
\subsection{Outline of study}
In sec.~\ref{Nondegenerate}, we change our previous focus in \cite{MomentBased,slaterJModPhys} from the moments and probability distributions associated with $|\rho^{PT}|$ to the associated variable $(|\rho^{PT}|-|\rho|)$, for which certain results appear to simplify, and in 
sec.~\ref{Degenerate} to the variable $|\rho_0^{PT}|$, where $\rho_0$ is minimally degenerate. In both
instances, once again applying the density-approximation procedure of Provost 
\cite{Provost}, we will find separability probabilities equal to {\it one-half} those obtained by use of the concise formula for $P(\alpha)$ ((\ref{Hou1})-(\ref{Hou3})). We, then, show the consistency of these results with a theorem of Szarek, Bengtsson and {\.Z}yczkowski
\cite{sbz}, thus, lending even further support to that already compiled for the validity
of the formula for $P(\alpha)$.
\section{Generic, nondegenerate cases ($|\rho| \neq 0$)} \label{Nondegenerate}
In the course of obtaining the $5F4$-hypergeometric-based Hilbert-Schmidt (HS) moment formula above--and a more general two-variable ($n, k$) form of it for $\left\langle \left\vert \rho^{PT}\right\vert
^{n}\left\vert \rho\right\vert ^{k}\right\rangle /\left\langle \left\vert
\rho\right\vert ^{k}\right\rangle$--there were employed certain intermediate ``utility functions'', in particular \cite[p. 26]{MomentBased}, to use the notation there,
\begin{align*}
F_{2}\left(  n,k\right)   &  =\left\langle \left\vert \rho\right\vert
^{k}\left(  \left\vert \rho^{PT}\right\vert -\left\vert \rho\right\vert
\right)  ^{n}\right\rangle /\left\langle \left\vert \rho\right\vert
^{k}\right\rangle, 
\end{align*}
incorporating the new variable of specific interest here, that is, 
$(|\rho^{PT}| -|\rho|)$.
Subsequently, we have obtained  the explicit formula (Appendix A)
\begin{align*}
F_{2}\left(  n,k\right)   &  =
\left(  -1\right)  ^{n}\frac{\left(  \alpha\right)  _{n}\left(  \alpha
+\frac{1}{2}\right)  _{n}\left(  n+2k+2+5\alpha\right)  _{n}}{2^{4n}\left(
k+3\alpha+\frac{3}{2}\right)  _{n}\left(  2k+6\alpha+\frac{5}{2}\right)
_{2n}}\\
& \times~_{4}F_{3}\left(
\genfrac{}{}{0pt}{}{-\frac{n}{2},\frac{1-n}{2},k+1+\alpha,k+1+2\alpha
}{1-n-\alpha,\frac{1}{2}-n-\alpha,n+2k+2+5\alpha}%
;1\right)  .
\end{align*}

We set $k=0$ in this formula, and once again applied the Legendre-polynomial-based-density-approximation procedure of Provost \cite{Provost}, in the same manner as in our previous studies \cite{MomentBased,slaterJModPhys}. It was first necessary to observe, however, that rather than the 
variable 
range $-\frac{1}{16} \leq |\rho^{PT}| \leq \frac{1}{256}$ employed in these earlier studies, the appropriate interval would now be \newline $-\frac{1}{16} \leq (|\rho^{PT}|-|\rho|) \leq \frac{1}{432}$. (Note that $432 = 2^4 \cdot 3^3$, as well as, of course, $16=2^4$ and $256 =2^8$.).
\subsection{Two-parameter family of density matrices illustrating range of 
$(|\rho^{PT}|-|\rho|)$}
The extreme values of this indicated range $[-\frac{1}{16}, \frac{1}{432}]$ can be illustrated by the use of a two-parameter family of density matrices
\begin{equation} \label{example2}
\rho = \left(
\begin{array}{cccc}
 u & 0 & 0 & v \\
 0 & \frac{1}{2}-u & 0 & 0 \\
 0 & 0 & \frac{1}{2}-u & 0 \\
 v & 0 & 0 & u \\
\end{array}
\right). 
\end{equation}
For $u =\frac{1}{4}$, $v=0$, we have the limiting value, $|\rho|=\frac{1}{256}$, while for $u = \frac{1}{2}$, $v=\frac{1}{2}$, we have the limit $|\rho^{PT}|=-\frac{1}{16}$. Further, for $u=\frac{1}{6}$, 
$v=  \frac{1}{6}$, we have both $|\rho|=0$ and $|\rho^{PT}| =\frac{1}{432}$.
If for this last choice of parameters, we interchange 
$|\rho|$ with its partial transpose $|\rho^{PT}|$, a value of $-\frac{1}{432}$, that is, the lower bound on the domain of separability, is obtained for the variable 
$(|\rho^{PT}|-|\rho|)$ of current interest. 

As examples of entangled states for which the values of $\left(  \left\vert \rho
^{PT}\right\vert -\left\vert \rho\right\vert \right)  $ are dense in $\left[
-\frac{1}{16},0\right]  $, we can employ  the above family (\ref{example2}) 
with $u=\frac{1}%
{2}-\varepsilon,2\varepsilon\leq v\leq\frac{1}{2}-2\varepsilon$, where
$0<\varepsilon<\frac{1}{6}$ so that $\frac{1}{2}-u<v<u$. (Note that $\rho$ is
positive-definite provided $0<u<\frac{1}{2}$ and $\left\vert v\right\vert
<u$.) For this family, $\left\vert \rho^{PT}\right\vert =\left(  \frac{1}%
{2}-\varepsilon\right)  ^{2}\left(  \varepsilon^{2}-v^{2}\right)  <0$ and
$\left\vert \rho^{PT}\right\vert -\left\vert \rho\right\vert =-v^{2}\left(
4u-1\right)  $. Thus, the range for the given parameters is $\left[  -\frac
{1}{16}\left(  1-4\varepsilon\right)  ^{3},-\varepsilon^{2}\left(
1-4\varepsilon\right)  \right]  $. Let $\varepsilon\rightarrow0_{+}$ to get
the interval of entanglement $\left[  -\frac{1}{16},0\right]  $.

We crucially rely throughout these series of analyses upon the  proposition that 
$|\rho^{PT}|>0$ is both a necessary and sufficient condition for a two-qubit state to be separable \cite{augusiak,Demianowicz}. To expand upon this point, the partial transpose of a $4 \times 4$ density matrix $\rho$ can possess at most one negative eigenvalue, so that the 
non-negativity of $|\rho^{PT}|$--the product of the four eigenvalues of $\rho^{PT}$--is 
tantamount to separability. 
\subsubsection{Intervals of interest in $6 \times 6$ density matrix case}
Quite contrastingly, and more complicatedly, in our ongoing study of generic 
(generalized qubit-{\it qutrit}) $6 \times 6$ density matrices endowed with the Hilbert-Schmidt measure \cite{Hybrid},  $|\rho^{PT}|$ can be {\it either} positive or negative 
for an entangled state. This is due to the possibility that {\it two} eigenvalues of $\rho^{PT}$ could now be negative. In this $6 \times 6$ case, it appears that the ranges of interest are $0 \leq |\rho| \leq (\frac{1}{6})^6 = \frac{1}{46656}$ and $-\frac{1}{2916} \leq |\rho^{PT}| \leq \frac{1}{2916}$, where $2916 = 2^2 \times 3^6$.
The interval of entanglement $-\frac{1}{2916} \leq |\rho^{PT}| \leq 0$ would be associated with a single negative eigenvalue, and $(\frac{1}{6})^6 \leq |\rho^{PT}| \leq \frac{1}{2916}$ with a pair of negative eigenvalues. The remaining segment $0 \leq |\rho^{PT}| \leq (\frac{1}{6})^6$ could have partial transposes having none--indicating 
separability--or two negative eigenvalues.
\subsection{Separability probability calculations, using density-approximation} \label{firstsepprobs}
In the generalized $4 \times 4$ density matrix scenario, the variable  $(|\rho^{PT}|-|\rho|)$ ranges over
$[-\frac{1}{16},\frac{1}{432}]$, with the the subrange $[0, \frac{1}{432}]$ of $(|\rho^{PT}|-|\rho|)$
containing only separable states. Now, employing 
$\alpha =1$ in the new $4F3$ hypergeometric-based  moment formula 
immediately above, we obtained,
based on 9,451 ($n =0, 1,\ldots 9,450$) moments, again using the Provost 
density-approximation methodology \cite{Provost}, an estimate for the separability probability (over $[-\frac{1}{16},\frac{1}{432}]$) that was 0.50000004358 as large as $P(1) = \frac{8}{33}$, given by eqs. (\ref{Hou1})-(\ref{Hou3}).
The parallel calculations in the two-rebit ($\alpha = \frac{1}{2}$) and two-``quaterbit''
($\alpha = 2$) cases yielded estimates of $0.5000025687 \times P(\frac{1}{2})$ 
and $0.5000000000177 \times P(2)$, respectively. (Differences in rates of convergence--much the same as observed in \cite{MomentBased}--can be attributed to the initial [zero{\it th}-order] assumption of the Legendre-polynomial-density-approximation procedure that the probability distributions to be fitted are uniform in nature, rendering more sharply-peaked distributions more difficult to rapidly approximate well.) {\it A fortiori}, for the $\alpha =4$ (conjecturally {\it octonionic}) value \cite[p. 9]{slaterJModPhys}, $P(4)= \frac{4482}{4091349}$, our computed value  was  $0.500000000000000015 \times P(4).$ These outcomes, certainly, help to strongly bolster the validity of the (yet formally unproven) concise formula ((\ref{Hou1})-(\ref{Hou3})), yielding the full (whole) generic Hilbert-Schmidt two-qubit separability probabilities $P(\alpha)$.

In Fig.~\ref{fig:BisectionPlot} we display an estimate based on the first 51 $(n =0,\ldots,50)$ moments of the probability distributions under analysis as a function of $\alpha$ over the subrange $[-\frac{1}{108},\frac{1}{432}]$ of the full range $[-\frac{1}{16},\frac{1}{432}]$ of 
$(|\rho^{PT}|-|\rho|)$. The distributions are more sharply peaked for smaller $\alpha$ (nearer to $\alpha = \frac{1}{2}$ in the plot), as the larger values of 
$P(\alpha)$ for smaller $\alpha$ would indicate (Appendix E).
\begin{figure}
\includegraphics{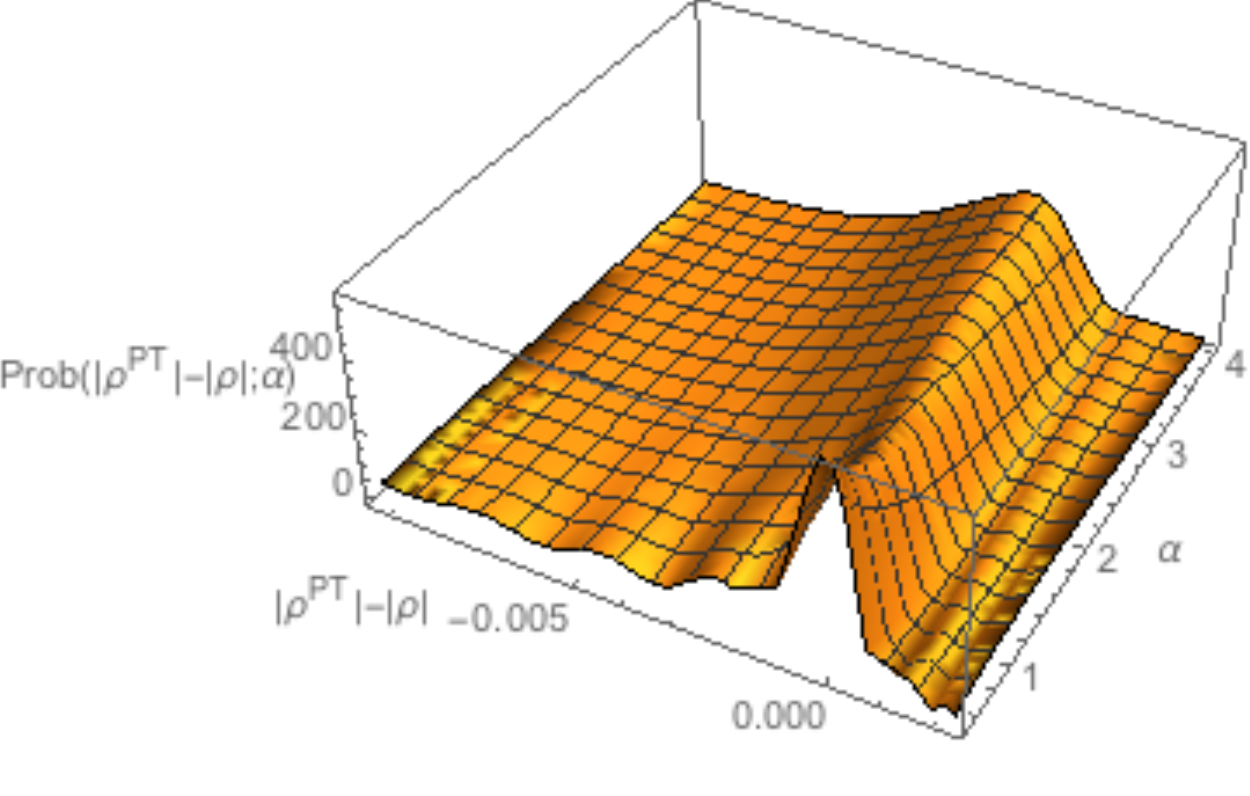}
\caption{\label{fig:BisectionPlot}Density-approximation estimation based on the first 51 moments of the probability distributions, as a function of the Dyson-index-like parameter $\alpha$, of the 
variable ($|\rho^{PT}|-|\rho|$)}
\end{figure}
\subsubsection{Probabilities over larger interval $[-\frac{1}{432}, \frac{1}{432}]$}
For the two-rebit, two-qubit and two-quaterbit probabilities over the extended interval
$[-\frac{1}{432}, \frac{1}{432}]$, symmetric about zero, containing all separable and now some entangled states (and thus providing {\it upper} bounds on the total separability probabilities), the estimates, again based on 9,451 moments were 0.78082617689, 0.69244685258 and 0.601390039979. However, we were not able to discern any particular underlying common structure (formula) in these values. 
\section{Generic, minimally degenerate cases ($|\rho| = 0$)} \label{Degenerate}
Let us now importantly note that these ``half-separability-probabilities'' of $\frac{29}{128}, \frac{4}{33}, \frac{13}{323}$ obtained above, appear, by Hilbert-Schmidt-based analyses of Szarek, Bengtsson and {\.Z}yczkowski \cite{sbz}, to be {\it exactly} equal to the ``full-separability-probabilities''
for the corresponding minimally-degenerate (boundary, that is $|\rho|=0$) generic two-rebit, two-qubit and 
two-quaterbit states. We are now able to make a further interesting connection to this body of work--and thereby find additional strong support for its findings, as well as our earlier ones \cite{MomentBased,slaterJModPhys},  obtained quite independently.

Let us note, firstly, that in \cite[sec. 7]{MomentBased} it was asserted that the range of $|\rho^{PT}|$ under the minimally-degenerate constraint $|\rho|=0$ is (once again, as it was for ($|\rho^{PT}|-|\rho|)$ above) the interval 
$[-\frac{1}{16},\frac{1}{432}]$. (Under this determinantal constraint, we will employ the notation 
$|\rho_0^{PT}|$.) 
The maximum of this range is attainable by the two-parameter density matrix (\ref{example2}), for example, with $u =\frac{1}{6}$ and $v= \frac{1}{6}$. 

In \cite[App. C]{MomentBased}, we had listed the two-rebit ($\alpha= \frac{1}{2}$) Hilbert-Schmidt moments of $|\rho_0^{PT}|^n$, $n=1,\ldots,10$. Now, in an exploratory exercise, we computed the ratio of these ten results to the corresponding moments given by the $4F3$-based  formula above for the moments of $(|\rho^{PT}|-|\rho|)$. Most interestingly, these ten ratios had the
explanatory formula (found by the Mathematica FindSequenceFunction command)
\begin{equation} \label{simpleratio}
\frac{(3 n+7) (4 n+9)}{9 (4 n+7)}.
\end{equation}
Then, performing further computations for $\alpha =1, n =1, 2$, it was possible to develop a line of reasoning (Appendix B)
that  the expression 
(\ref{simpleratio}) was the $\alpha = \frac{1}{2}$-specific case of a more 
general moment formula, incorporating the factor 
\begin{equation} \label{fullygeneral}
\frac{(10 \alpha +3 n+2) (12 \alpha +4 n+3)}{(12 \alpha +3) (10 \alpha +4 n+2)} 
\end{equation}
(equaling 1 for $n=0$). 
In fact, the existence of a ratio of this form between the moments implies the equality of the probabilities that the respective random variables--in the case at hand, ($|\rho^{PT}| -|\rho|$) and $|\rho_0^{PT}|$--are positive (Appendix C).
\subsection{Separability probability calculations, using density-approximation}
We employed 9,451 of the original $4F3$-based moments now adjusted by this last ratio 
(\ref{fullygeneral}), in the density-approximation routine of Provost, just as before. For $\alpha =\frac{1}{2}, 1, \frac{3}{2}$ and 2, we obtained for the cumulative probabilities over the separability interval $[0,\frac{1}{432}]$, the values $0.50000261669 \times P(\frac{1}{2})$, $0.50000003530 \times P(1)$, $0.50000000060467 \times P(\frac{3}{2})$ and 
$0.50000000001267 \times P(2)$, and 
similarly for $\alpha > 2$. 

We note that the convergence of these results to 
$P(\alpha)/2$ is somewhat superior than in the earlier parallel set of analyses 
for ($|\rho^{PT}|-|\rho|$) (sec.~\ref{firstsepprobs}). Apparently relatedly, the ratio of the standard deviation of the probability distribution of ($|\rho^{PT}|-|\rho|$) to that of $|\rho_0^{PT}|$ is 0.788 for $\alpha = \frac{1}{2}$ and 0.857 for $\alpha =1$. So, the distribution for 
($|\rho^{PT}|-|\rho|$) is more peaked at the value zero. Thus, the Legendre-polynomial-based density-approximation procedure (which starts with a uniform approximation) is slower to converge in those cases. Further consistent with this observation, based now on 6,301 moments, the ratio  of the ``$y$''-intercept for  ($|\rho^{PT}|-|\rho|$) to that for $|\rho_0^{PT}|$ was estimated as 1.30202 for $\alpha =\frac{1}{2}$ and 1.21134 for $\alpha =1$.
\subsubsection{Relations to separability-probability theorem of Szarek, Bengtsson and {\.Z}yczkowski}
These density approximation estimations  extraordinarily close to one-half 
certainly are strongly in line with
the main finding  regarding the Hilbert-Schmidt separability probabilities of minimally degenerate (boundary) 
states of Szarek, Bengtsson and {\.Z}yczkowski \cite[Theorem 2]{sbz}.
These three authors had established that the set of positive-partial-transpose states for an arbitrary bipartite systems is ``pyramid-decomposable'' and hence, a body of ``constant height''. They stated that ``since our reasoning hinges directly on the Euclidean geometry, it does not allow one to predict any values of analogous ratios computed with respect to the Bures measure, nor other measures'' \cite[p. L125]{sbz}.

Nonetheless, the ``symmetric halves'' separability-probability finding elucidated above 
(that is, the separability probability for $|\rho^{PT}| >|\rho|$ equaling that for 
$|\rho|>|\rho^{PT}|$) does appear to be measure-independent, that is extendible from the Hilbert-Schmidt  (flat, Euclidean) metric to the use of alternative metrics, such as the Bures (minimal monotone) metric \cite{szBures,ingemarkarol}. 
\subsubsection{Rank-two $4 \times 4$ density matrices}
We have also been able to conclude that for the generic 
rank-{\it two} $4 \times 4$ density matrices (for which, of course, $|\rho|$ is also zero)--as opposed to the generic rank-three 
(minimally degenerate) ones just analyzed,
the Hilbert-Schmidt separability probability is zero. An intuitive argument to this effect is that if one possesses a rank-two $4 \times 4$ density matrix with a positive partial transpose, then if one interchanges the role of these two matrices, one has a partial transpose with two zero eigenvalues (cf. \cite{Ishizaka}). Such a scenario is infinitesimally close to one with two (slightly) negative eigenvalues--a situation which has been well-established is not tenable \cite{augusiak,Demianowicz}. (Somewhat contrastingly, in \cite{SlaterPRA}, numerical evidence indicated that the ratio of Hilbert-Schmidt separability probabilities for generic [rank-six] $6 \times 6$ density matrices to rank-four such matrices was close to the integer 34.) {\it A fortiori}, the Hilbert-Schmidt separability probability of the generic rank-one (pure states) $4 \times 4$ density matrices is also zero.
\section{Concluding Remarks}
In need of further study is the issue of whether or not the Dyson-index {\it ansatz} of random matrix theory \cite{dumitriu,caselle}--apparently applicable in the Hilbert-Schmidt case, as our various results for general $\alpha$ so far would indicate--extends to other measures (Bures,\ldots), as well (cf. \cite{BuresHilbert,Hybrid}).

We note, regretfully, of course, that formal proofs for the Hilbert-Schmidt determinantal moment formulas and the density-approximation results obtained with their use have not yet been advanced--and certainly still seem far from development. Certainly, however, the cumulative computational evidence appears very strong for the validity of, {\it inter alia},  
the indicated $\frac{29}{64}, \frac{8}{33}$ and $\frac{26}{323}$  two-rebit, two-qubit and two-quaterbit separability probabilities. Noticeably still lacking 
is an insightful geometric intuition into the geometry
of the $4 \times 4$ density matrices that might help to explain such results (cf. \cite{avron,avron2,holik,milne,sarbicki}). Can the $\frac{8}{33}$ two-qubit separability probability result, for example, only be understood in some sense as a limiting result--as the infinite-summation ``concise formula'' ((\ref{Hou1})-(\ref{Hou3})) might seem to indicate--or is it possibly remarkably manifest in some {\it discrete} (pyramidal? \cite{sbz}) subdivision of the 15-dimensional convex set of $4 \times 4$ density matrices?

Possible extensions of the research program presented above and in \cite{MomentBased,slaterJModPhys} to the Hilbert-Schmidt case of $6 \times 6$ (qubit-qutrit) density matrices and the Bures instance of $4 \times 4$ density matrices have been investigated in 
\cite{BuresHilbert,Hybrid}. Some limited determinantal moment computations have been reported 
($\alpha = \frac{1}{2}$, $n=1, 2$; $\alpha =1$, $n=1$, in both instances) (Appendix D), but yet no comparable formulas of the type 
$\left\langle \left\vert \rho\right\vert ^{k} \left\vert \rho^{PT}\right\vert
^{n}\right\rangle /\left\langle \left\vert
\rho\right\vert ^{k}\right\rangle$ nor $\left\langle \left\vert \rho\right\vert ^{k} (\left\vert \rho^{PT}\right\vert -|\rho|)^{n}\right\rangle /\left\langle \left\vert
\rho\right\vert ^{k}\right\rangle$ developed. Such formulas have been the fundamental basis for most of the advances noted here and previously \cite{MomentBased,slaterJModPhys}.\\
\bigskip
\noindent {\bf{Appendix A: Moments of $\left(  \left\vert \rho^{PT}\right\vert -\left\vert
\rho\right\vert \right)  $}}

Consider the general $\alpha$ case, generic $k$.

Let%
\[
g\left(  k,n\right)  :=\frac{\left(  k+1\right)  _{n}\left(  k+1+\alpha
\right)  _{n}\left(  k+1+2\alpha\right)  _{n}}{2^{6n}\left(  k+3\alpha
+\frac{3}{2}\right)  _{n}\left(  2k+6\alpha+\frac{5}{2}\right)  _{2n}},
\]
there is a multiplication relation:%
\[
g\left(  0,k\right)  g\left(  k,n\right)  =g\left(  0,k+n\right)  .
\]
Let%
\[
h\left(  k,n\right)  :=~_{5}F_{4}\left(
\genfrac{}{}{0pt}{}{-n,-k,\alpha,\alpha+\frac{1}{2},-2k-2n-1-5\alpha
}{-k-n-\alpha,-k-n-2\alpha,-\frac{k+n}{2},-\frac{k+n-1}{2}}%
;1\right)  .
\]
Then%
\begin{align*}
\left\langle \left\vert \rho\right\vert ^{k}\right\rangle  &  =g\left(
0,k\right) \\
\left\langle \left\vert \rho^{PT}\right\vert ^{n}\left\vert \rho\right\vert
^{k}\right\rangle /\left\langle \left\vert \rho\right\vert ^{k}\right\rangle
&  =g\left(  k,n\right)  h\left(  k,n\right) \\
\left\langle \left\vert \rho^{PT}\right\vert ^{n}\left\vert \rho\right\vert
^{k}\right\rangle  &  =g\left(  0,k+n\right)  h\left(  k,n\right)  .
\end{align*}
Define%
\[
F_{2}\left(  n,k\right)  =\left\langle \left\vert \rho\right\vert ^{k}\left(
\left\vert \rho^{PT}\right\vert -\left\vert \rho\right\vert \right)
^{n}\right\rangle /\left\langle \left\vert \rho\right\vert ^{k}\right\rangle
,
\]
then%
\begin{align*}
F_{2}\left(  n,k\right)   &  =\frac{1}{g\left(  0,k\right)  }\sum_{j=0}%
^{n}\binom{n}{j}\left(  -1\right)  ^{n-j}\left\langle \left\vert
\rho\right\vert ^{k+n-j}\left\vert \rho^{PT}\right\vert ^{j}\right\rangle \\
&  =\frac{1}{g\left(  0,k\right)  }\sum_{j=0}^{n}\binom{n}{j}\left(
-1\right)  ^{n-j}g\left(  0,k+n\right)  h\left(  k+n-j,j\right) \\
&  =g\left(  k,n\right)  \sum_{j=0}^{n}\binom{n}{j}\left(  -1\right)
^{n-j}h\left(  k+n-j,j\right)  .
\end{align*}
We will produce $F_{2}^{\prime}\left(  n,k\right)  :=\sum_{j=0}^{n}\binom
{n}{j}\left(  -1\right)  ^{n-j}h\left(  k+n-j,j\right)  $ as a single sum (so
that $F_{2}\left(  n,k\right)  =g\left(  k,n\right)  F_{2}^{\prime}\left(
n,k\right)  $).

\begin{lemma}
Let $n,m=0,1,2,\ldots$ and let $x$ be a variable, if $0\leq m\leq n$ then%
\[
\sum_{j=0}^{n}\frac{\left(  -n\right)  _{j}}{j!}\left(  -j\right)  _{m}\left(
x+j\right)  _{m}=\left(  -1\right)  ^{m}\frac{\left(  x\right)  _{2m}}{\left(
x\right)  _{n}}\left(  -n\right)  _{m}\left(  -m\right)  _{n-m},
\]
otherwise the sum is zero.
\end{lemma}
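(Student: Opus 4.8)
The plan is to collapse the left-hand side to a terminating Gauss $_{2}F_{1}$ at unit argument and then invoke the Chu--Vandermonde summation. First I would dispose of the degenerate case. Since $(-j)_m=(-1)^m\,j!/(j-m)!$ when $j\ge m$ and $(-j)_m=0$ when $0\le j<m$, every term with $j<m$ vanishes identically. If $m>n$ this annihilates the entire range $0\le j\le n$, so the sum is zero, which is the stated ``otherwise'' conclusion. Hence I may assume $0\le m\le n$ and restrict the effective summation to $m\le j\le n$.

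Next I would rewrite the two length-$m$ Pochhammer factors into a form adapted to reindexing. Using $(x+j)_m=(x)_{j+m}/(x)_j$ together with the explicit value of $(-j)_m$, the $j$th term becomes $(-1)^m\,(-n)_j\,(x)_{j+m}/\big((j-m)!\,(x)_j\big)$. Substituting $i=j-m$ and splitting each Pochhammer symbol via the elementary rule $(a)_{i+k}=(a)_k\,(a+k)_i$ (applied to $(-n)_{i+m}$, to $(x)_{i+2m}$, and to $(x)_{i+m}$), I would pull out every factor independent of $i$ and be left with
\[
(-1)^m\,(-n)_m\,\frac{(x)_{2m}}{(x)_m}\sum_{i=0}^{n-m}\frac{(-(n-m))_i\,(x+2m)_i}{(x+m)_i\,i!},
\]
where the residual sum is precisely ${}_{2}F_{1}\big(-(n-m),\,x+2m;\,x+m;\,1\big)$.

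Then I would evaluate this by Chu--Vandermonde, ${}_{2}F_{1}(-N,b;c;1)=(c-b)_N/(c)_N$, with $N=n-m$, $b=x+2m$, $c=x+m$, so that $c-b=-m$ and the inner sum equals $(-m)_{n-m}/(x+m)_{n-m}$. Finally, the telescoping identity $(x)_m\,(x+m)_{n-m}=(x)_n$ collapses the two denominators into $(x)_n$, producing $(-1)^m\,(-n)_m\,(-m)_{n-m}\,(x)_{2m}/(x)_n$, which is exactly the claimed right-hand side. I do not expect a genuine obstacle here; the only step demanding care is the bookkeeping in the index shift $i=j-m$ together with the threefold Pochhammer splitting, which must be arranged so that the surviving summand assembles cleanly into the single $_{2}F_{1}$ displayed above. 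Once that reduction is in place, the Chu--Vandermonde evaluation and the telescoping simplification are automatic, and the case distinction $m\le n$ versus $m>n$ is already handled by the vanishing of $(-j)_m$ in the initial reduction.
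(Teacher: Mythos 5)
Your proof is correct and follows essentially the same route as the paper's: both reduce the sum to the range $m\leq j\leq n$ via the vanishing of $(-j)_m$, reindex with $j=m+i$, evaluate the resulting terminating ${}_{2}F_{1}$ at unit argument by Chu--Vandermonde, and collapse $(x)_m(x+m)_{n-m}=(x)_n$. The only difference is cosmetic bookkeeping in when the Pochhammer splitting is performed, so nothing further is needed.
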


\begin{proof}
If $m>n$ then $\left(  -j\right)  _{m}=0$ for $0\leq j\leq n$. Suppose $0\leq
m\leq n$ then $\left(  -j\right)  _{m}=0$ for $0\leq j<m$ and the sum is over
$m\leq j\leq n$. Thus
\begin{align*}
\sum_{j=m}^{n}\frac{\left(  -n\right)  _{j}}{j!}\left(  -j\right)  _{m}\left(
x+j\right)  _{m}  &  =\left(  -1\right)  ^{m}\sum_{j=m}^{n}\frac{\left(
-n\right)  _{j}~j!}{j!\left(  j-m\right)  !}\frac{\left(  x\right)
_{j}\left(  x+j\right)  _{m}}{\left(  x\right)  _{j}}\\
&  =\left(  -1\right)  ^{m}\left(  x\right)  _{m}\sum_{j=m}^{n}\frac{\left(
-n\right)  _{j}\left(  x+m\right)  _{j}}{\left(  j-m\right)  !\left(
x\right)  _{j}}.
\end{align*}
Change the index of summation $j=m+i$ then the sum equals%
\begin{align*}
&  \left(  -1\right)  ^{m}\frac{\left(  -n\right)  _{m}\left(  x\right)
_{m}\left(  x+m\right)  _{m}}{\left(  x\right)  _{m}}\sum_{i=0}^{n-m}%
\frac{\left(  m-n\right)  _{i}\left(  x+2m\right)  _{i}}{i!\left(  x+m\right)
_{i}}\\
&  =\left(  -1\right)  ^{m}\frac{\left(  -n\right)  _{m}\left(  x\right)
_{2m}}{\left(  x\right)  _{m}}\frac{\left(  -m\right)  _{n-m}}{\left(
x+m\right)  _{n-m}}\\
&  =\left(  -1\right)  ^{m}\left(  -n\right)  _{m}\left(  -m\right)
_{n-m}\frac{\left(  x\right)  _{2m}}{\left(  x\right)  _{n}},
\end{align*}
by the Chu-Vandermonde sum.
\end{proof}

Observe that $\left(  -m\right)  _{n-m}=0$ for $2m<n$. Then%
\begin{align*}
F_{2}^{\prime}\left(  n,k\right)   &  =\left(  -1\right)  ^{n}\sum_{j=0}%
^{n}\frac{\left(  -n\right)  _{j}}{j!}\\
&  \times\sum_{i=0}^{n}\frac{\left(  -j\right)  _{i}\left(  j-k-n\right)
_{i}\left(  \alpha\right)  _{i}\left(  \alpha+\frac{1}{2}\right)  _{i}\left(
-2k-2n-1-5\alpha\right)  _{i}}{i!\left(  -k-n-\alpha\right)  _{i}\left(
-k-n-2\alpha\right)  _{i}\left(  -\frac{k+n}{2}\right)  _{i}\left(
-\frac{k+n-1}{2}\right)  _{i}}\\
&  =\left(  -1\right)  ^{n}\sum_{i=0}^{n}\frac{\left(  \alpha\right)
_{i}\left(  \alpha+\frac{1}{2}\right)  _{i}\left(  -2k-2n-1-5\alpha\right)
_{i}}{i!\left(  -k-n-\alpha\right)  _{i}\left(  -k-n-2\alpha\right)
_{i}\left(  -\frac{k+n}{2}\right)  _{i}\left(  -\frac{k+n-1}{2}\right)  _{i}%
}\\
&  \times\sum_{j=0}^{n}\frac{\left(  -n\right)  _{j}}{j!}\left(  -j\right)
_{i}\left(  j-k-n\right)  _{i}.
\end{align*}
Apply the lemma to the $j$-sum with $x=-k-n$ and $m=i$ to obtain%
\[
\left(  -1\right)  ^{i}\left(  -n\right)  _{i}\left(  -i\right)  _{n-i}%
\frac{\left(  -n-k\right)  _{2i}}{\left(  -n-k\right)  _{n}}=\left(
-1\right)  ^{i}\frac{\left(  -n\right)  _{i}\left(  -i\right)  _{n-i}}{\left(
-n-k\right)  _{n}}2^{2i}\left(  -\frac{k+n}{2}\right)  _{i}\left(
-\frac{k+n-1}{2}\right)  _{i}%
\]
and thus
\[
F_{2}^{\prime}\left(  n,k\right)  =\frac{\left(  -1\right)  ^{n}}{\left(
-n-k\right)  _{n}}\sum_{i=0}^{n}\frac{\left(  -n\right)  _{i}\left(
-i\right)  _{n-i}\left(  \alpha\right)  _{i}\left(  \alpha+\frac{1}{2}\right)
_{i}\left(  -2k-2n-1-5\alpha\right)  _{i}}{i!\left(  -k-n-\alpha\right)
_{i}\left(  -k-n-2\alpha\right)  _{i}}\left(  -1\right)  ^{i}2^{2i}.
\]
This is not in hypergeometric form because of the term $\left(  -i\right)
_{n-i}$; also the summation extends over $\frac{n}{2}\leq i\leq n$. Change the
index $j=n-i$ then
\begin{align*}
\frac{\left(  -n\right)  _{i}}{i!}\left(  -i\right)  _{n-i} &  =\left(
-1\right)  ^{i}\frac{n!}{\left(  n-i\right)  !i!}\left(  -1\right)
^{n-i}\frac{i!}{\left(  2i-n\right)  !}=\left(  -1\right)  ^{n}\frac
{n!}{j!\left(  n-2j\right)  !}\\
&  =\left(  -1\right)  ^{n}\frac{2^{2j}}{j!}\left(  -\frac{n}{2}\right)
_{j}\left(  \frac{1-n}{2}\right)  _{j}%
\end{align*}
and the reversal formula is%
\begin{align*}
\left(  x\right)  _{i} &  =\left(  x\right)  _{n-j}=\frac{\left(  x\right)
_{n-j}\left(  x+n-j\right)  _{j}}{\left(  x+n-j\right)  _{j}}\\
&  =\left(  -1\right)  ^{j}\frac{\left(  x\right)  _{n}}{\left(  1-n-x\right)
_{j}}.
\end{align*}
Thus%
\begin{align}
F_{2}^{\prime}\left(  n,k\right)   &  =\frac{\left(  -1\right)  ^{n}\left(
\alpha\right)  _{n}\left(  \alpha+\frac{1}{2}\right)  _{n}\left(
-2k-2n-1-5\alpha\right)  _{n}}{\left(  -n-k\right)  _{n}\left(
-k-n-\alpha\right)  _{n}\left(  -k-n-2\alpha\right)  _{n}}\label{F2p}\\
&  \times\sum_{j=0}^{[n/2]}\frac{\left(  -\frac{n}{2}\right)  _{j}\left(  \frac
{1-n}{2}\right)  _{j}\left(  k+1+\alpha\right)  _{j}\left(  k+1+2\alpha
\right)  _{j}}{j!\left(  1-n-\alpha\right)  _{j}\left(  \frac{1}{2}%
-n-\alpha\right)  _{j}\left(  n+2k+2+5\alpha\right)  _{j}}2^{2j+2n-2j}%
\nonumber\\
&  =\left(  -1\right)  ^{n}2^{2n}\frac{\left(  \alpha\right)  _{n}\left(
\alpha+\frac{1}{2}\right)  _{n}\left(  n+2k+2+5\alpha\right)  _{n}}{\left(
k+1\right)  _{n}\left(  k+1+\alpha\right)  _{n}\left(  k+1+2\alpha\right)
_{n}}\nonumber\\
&  \times~_{4}F_{3}\left(
\genfrac{}{}{0pt}{}{-\frac{n}{2},\frac{1-n}{2},k+1+\alpha,k+1+2\alpha
}{1-n-\alpha,\frac{1}{2}-n-\alpha,n+2k+2+5\alpha}%
;1\right)  ;\nonumber
\end{align}
a balanced sum.

The formula was tested for $F_{2}\left(  2,k\right)  $, also directly verified
for $n=3$, arbitrary $\alpha$.

Combining the front factors in $F_{2}\left(  n,k\right)  $ (from $g\left(
k,n\right)  $) we obtain%
\begin{equation}
\left(  -1\right)  ^{n}\frac{\left(  \alpha\right)  _{n}\left(  \alpha
+\frac{1}{2}\right)  _{n}\left(  n+2k+2+5\alpha\right)  _{n}}{2^{4n}\left(
k+3\alpha+\frac{3}{2}\right)  _{n}\left(  2k+6\alpha+\frac{5}{2}\right)
_{2n}}.\label{F2fact}%
\end{equation}
\bigskip
{\bf{Appendix B: Minimally degenerate case}}

Recall the Cholesky decomposition $\rho=C^{\ast}C$ where%
\begin{equation}
C=%
\begin{bmatrix}
x_{1} & x_{5} & x_{6} & x_{7}\\
0 & x_{2} & x_{8} & x_{9}\\
0 & 0 & x_{3} & x_{10}\\
0 & 0 & 0 & x_{4}%
\end{bmatrix}
\label{Chol}%
\end{equation}
with $\sum_{j=1}^{10}\left\vert x_{j}\right\vert ^{2}=1,x_{j}\geq0$ for $1\leq
j\leq4$ and $x_{j}\in\mathbb{R}$ for $\alpha=\frac{1}{2}$, or $x_{j}%
\in\mathbb{C}$ for $\alpha=1$ ($5\leq j\leq10$). Then $\left\vert
\rho\right\vert =\left(  x_{1}x_{2}x_{3}x_{4}\right)  ^{2}$. Denote
integration over the space of $4\times4$ positive-definite matrices with trace
one by $\left\langle \cdot\right\rangle $. Suppose $p$ is a monomial in
$\left\{  x_{j},\overline{x_{j}}\right\}  $ then%
\begin{equation}
\left\langle p\right\rangle =\frac{\left(  1+3\alpha\right)  _{n_{1}}\left(
1+2\alpha\right)  _{n_{2}}\left(  1+\alpha\right)  _{n_{3}}\left(  1\right)
_{n_{4}}}{\left(  4+12\alpha\right)  _{n}}\prod_{j=5}^{10}\left(
\alpha\right)  _{n_{j}}\label{monom}%
\end{equation}
where $n=\sum_{j=1}^{10}n_{j}$ and:

\begin{itemize}
\item $\alpha=\frac{1}{2},~p=\prod_{j=1}^{10}x_{j}^{2n_{j}}$, that is, each
exponent is even;

\item $\alpha=1,~p=\prod_{j=1}^{4}x_{j}^{2n_{j}}\times\prod_{j=5}^{10}\left(
x_{j}\overline{x_{j}}\right)  ^{n_{j}}$, that is, $p$ is a monomial in
$x_{1}^{2},\cdots,x_{4}^{2},\left\vert x_{5}\right\vert ^{2},\cdots,\left\vert
x_{10}\right\vert ^{2}$;
\end{itemize}

otherwise $\left\langle p\right\rangle =0$.

The boundary of the set of states (positive-definite matrices with trace one)
contains $\Omega_{0}$, the set of positive-semidefinite matrices of possible
ranks $1,2$ or $3$ (and trace one). In the following discussion the parameters
are stated first for the real $\alpha=\frac{1}{2}$ case, then in parentheses
for the complex $\alpha=1$ case. We consider the determinant of the partial
transpose, denoted by $\left\vert \rho_{0}^{PT}\right\vert $, as a random
variable defined on $\Omega_{0}$, with respect to the Hilbert-Schmidt measure,
that is, the Euclidean $8$-dimensional (resp. $14$) measure, restricted to
$\Omega_{0}$. We claim that integrating with respect to this measure can be
carried out by integrating over Cholesky products with $x_{4}=0$ and the
surface measure on the unit sphere in $\mathbb{R}^{9}$ (resp. $\mathbb{R}%
^{15}$) and the Jacobian $x_{1}^{5/2}x_{2}^{2}x_{3}^{3/2}$ (resp. $x_{1}%
^{4}x_{2}^{3}x_{3}^{2}$). The generic (or random) elements of $\Omega_{0}$ are
called minimally degenerate.

Let $\Omega_{0}^{C}$ denote the set of Cholesky products $C^{\ast}C$ with the
conditions as in (\ref{Chol}) and with $x_{4}=0$. The same arguments used in
\cite[sec. D.2]{MomentBased} show that the surface measure on the sphere multiplied by the
above Jacobian is mapped to the HS-measure on $\Omega_{0}^{C}$. So it remains
to show that the elements of $\Omega_{0}\backslash\Omega_{0}^{C}$ do not enter
into the probability calculation. For any real symmetric (resp. Hermitian)
$4\times4$ matrix $M$ let $d_{j}\left(  M\right)  $ denote the determinant of
the upper left $j\times j$ submatrix of $M$ (a principal minor), for $1\leq
j\leq4$. Then $M$ is positive-definite (resp. positive-semidefinite) if and
only if $d_{j}\left(  M\right)  >0$ for all $j$ (resp. $d_{j}\left(  M\right)
\geq0$ for all $j$). Suppose $\rho\in\Omega_{0}$ then $\rho$ has a unique
Cholesky factorization if $d_{j}\left(  \rho\right)  >0$ for $j=1,2,3$; these
conditions imply that $x_{1}x_{2}x_{3}>0$. As a consequence $x_{4}=0$ because
$\left\vert \rho\right\vert =\left(  x_{1}x_{2}x_{3}x_{4}\right)  ^{2}$, and
thus $\rho\in\Omega_{0}^{C}$.

As contrapositive we have shown that $\rho\in\Omega_{0}\backslash\Omega
_{0}^{C}$ implies $d_{j}\left(  \rho\right)  =0$ for at least one value of
$j=1,2,3$. This is an additional algebraic condition besides $\left\vert
\rho\right\vert =0$ satisfied by the entries of $\rho$, that is $\rho$ belongs
to a manifold (or variety) of lower dimension in $\Omega_{0}$ ($<8$ for
$\alpha=\frac{1}{2}$, and $<14$ for $\alpha=1$) and such sets have HS-measure zero.

The appropriate measure on the set of Cholesky factors with $x_{4}=0$ can be
interpreted as a conditional density on a subset of the unit sphere, or as the
surface measure on the sphere in one less dimension. In general suppose $f$ is
a density on some region $E\subset\mathbb{R}^{N}$ then the conditional density
given $y_{N}=u$ is%
\[
\frac{f\left(  y_{1},\cdots,y_{N-1},u\right)  }{\int_{E_{u}}f\left(
t_{1},\cdots,t_{N-1},u\right)  dt_{1}\cdots dt_{N-1}}%
\]
where $E_{u}:=\left\{  t\in\mathbb{R}^{N-1}:\left(  t,u\right)  \in E\right\}
$. In our situation the density vanishes on the set of interest ($x_{4}=0$) so
we need to take a limit.

Consider a general Dirichlet density%
\[
\frac{\Gamma\left(  \alpha_{1}+\cdots+\alpha_{N}\right)  }{\Gamma\left(
\alpha_{1}\right)  \cdots\Gamma\left(  \alpha_{N}\right)  }\prod_{j=1}%
^{N}y_{j}^{\alpha_{j}-1}%
\]
on $T_{N-1}=\left\{  y\in\mathbb{R}_{N-1}:y_{j}\geq0~\forall j,\sum
_{j=1}^{N-1}y_{j}\leq1\right\}  $ and $y_{N}:=1-\sum_{j=1}^{N-1}y_{j}$; also
$\alpha_{j}>0~\forall j$. Compute the conditional density given $x_{N}=u$ with
$0<u<1$; then $E_{u}=\left(  1-u\right)  T_{N-2}$, that is, $E_{u}=\left\{
y\in\mathbb{R}_{N-2}:y_{j}\geq0~\forall j,\sum_{j=1}^{N-2}y_{j}\leq
1-u\right\}  $. After a simple change-of-variable we obtain the conditional
density%
\[
\frac{\Gamma\left(  \alpha_{1}+\cdots+\alpha_{N-1}\right)  }{\Gamma\left(
\alpha_{1}\right)  \cdots\Gamma\left(  \alpha_{N-1}\right)  \left(
1-u\right)  ^{\beta-1}}\prod_{j=1}^{N-2}y_{j}^{\alpha_{j}-1},
\]
where $\beta=\sum_{j=1}^{N-1}\alpha_{j},y\in E_{u}$ and $y_{N-1}%
:=1-u-\sum_{j=1}^{N-2}y_{j}$. Now we can take the limit $u\rightarrow0_{+}$
and obtain the obvious Dirichlet distribution on $T_{N-2}$.

By applying this general result to the Cholesky factor (where $y_{j}=x_{j}%
^{2}$ or $\left\vert x_{j}\right\vert ^{2}$ and $N=10,16$ for $\alpha=\frac
{1}{2},1$ respectively) we find that the formula for the integral of a
monomial with respect to the ($x_{4}=0$)-conditional density is very similar
to the general one
\begin{equation}
\left\langle p\right\rangle =\frac{\left(  1+3\alpha\right)  _{n_{1}}\left(
1+2\alpha\right)  _{n_{2}}\left(  1+\alpha\right)  _{n_{3}}\left(  0\right)
_{n_{4}}}{\left(  3+12\alpha\right)  _{n}}\prod_{j=5}^{10}\left(
\alpha\right)  _{n_{j}},\label{densx40}%
\end{equation}
with the same rules for $n_{i}$ as before; the effect of the term $\left(
0\right)  _{n_{4}}$ is that $\left\langle p\right\rangle =0$ for any monomial
having $x_{4}$ as a factor; note $\left(  0\right)  _{0}=1$ (the empty product).

We proceed to the main results (conjectures): there is a natural decomposition%
\[
\left\vert \rho^{PT}\right\vert =f_{1}\left(  x^{\prime}\right)  +x_{4}%
^{2}f_{2}\left(  x^{\prime}\right)  +\left\vert \rho\right\vert ,
\]
where $x^{\prime}$ omits $x_{4}$ (that is, $x^{\prime}=\left(  x_{1}%
,\cdots,x_{3},x_{5},\cdots,x_{10}\right)  $). In the previous section there is
a formula for $\left\langle \left\vert \rho\right\vert ^{k}\left(  \left\vert
\rho^{PT}\right\vert -\left\vert \rho\right\vert \right)  ^{n}\right\rangle
=\left\langle \left\vert \rho\right\vert ^{k}\left(  f_{1}\left(  x^{\prime
}\right)  +x_{4}^{2}f_{2}\left(  x^{\prime}\right)  \right)  ^{n}\right\rangle
$. Note $f_{1}\left(  x^{\prime}\right)  =\left\vert \rho_{0}^{PT}\right\vert
$ when $\rho\in\Omega_{0}^{C}$. It is desired to find $\left\langle
f_{1}\left(  x^{\prime}\right)  ^{n}\right\rangle $, namely the $n^{th}$
moment of $\left\vert \rho_{0}^{PT}\right\vert $, for the conditional density
(\ref{densx40}). The key step is to consider%
\[
\left\langle \left(  x_{1}x_{2}x_{3}\right)  ^{2k}x_{4}^{2k-2}\left(
f_{1}\left(  x^{\prime}\right)  +x_{4}^{2}f_{2}\left(  x^{\prime}\right)
\right)  ^{n}\right\rangle ,
\]
for $k=1,2,3,\ldots$ ($k>0$ is required for integrability). Consider the
integrals of monomials (notations as in (\ref{monom})):%
\begin{gather}
\left\langle \left(  x_{1}x_{2}x_{3}x_{4}\right)  ^{2k}p\right\rangle
=\frac{\left(  1+3\alpha\right)  _{k}\left(  1+2\alpha\right)  _{k}\left(
1+\alpha\right)  _{k}\left(  1\right)  _{k}}{\left(  4+12\alpha\right)  _{4k}%
}\label{det1}\\
\times\frac{\left(  k+1+3\alpha\right)  _{n_{1}}\left(  k+1+2\alpha\right)
_{n_{2}}\left(  k+1+\alpha\right)  _{n_{3}}\left(  k+1\right)  _{n_{4}}%
}{\left(  4+4k+12\alpha\right)  _{n}}\prod_{j=5}^{10}\left(  \alpha\right)
_{n_{j}},\label{det2}%
\end{gather}
and%
\begin{gather}
\left\langle \left(  x_{1}x_{2}x_{3}\right)  ^{2k}x_{4}^{2k-2}p\right\rangle
=\frac{\left(  1+3\alpha\right)  _{k}\left(  1+2\alpha\right)  _{k}\left(
1+\alpha\right)  _{k}\left(  1\right)  _{k-1}}{\left(  4+12\alpha\right)
_{4k-1}}\label{det3}\\
\times\frac{\left(  k+1+3\alpha\right)  _{n_{1}}\left(  k+1+2\alpha\right)
_{n_{2}}\left(  k+1+\alpha\right)  _{n_{3}}\left(  k\right)  _{n_{4}}}{\left(
3+4k+12\alpha\right)  _{n}}\prod_{j=5}^{10}\left(  \alpha\right)  _{n_{j}%
}.\label{det4}%
\end{gather}
Observe that the quantity on the right side of (\ref{det1}) is $\left\langle
\left\vert \rho\right\vert ^{k}\right\rangle $ and the quantity on the right
side of (\ref{det3}) is $\left\langle \left(  x_{1}x_{2}x_{3}\right)
^{2k}x_{4}^{2k-2}\right\rangle $. \newline
{\bf{Conjecture}} \newline
For $k=1,2,3,\ldots,\alpha=\frac{1}{2}$ or $1$%
\begin{gather*}
\left\langle \left(  x_{1}x_{2}x_{3}\right)  ^{2k}x_{4}^{2k-2}\left(
\left\vert \rho^{PT}\right\vert -\left\vert \rho\right\vert \right)
^{n}\right\rangle =\left\langle \left(  x_{1}x_{2}x_{3}\right)  ^{2k}%
x_{4}^{2k-2}\right\rangle \\
\times\frac{\left(  2+4k+10\alpha+3n\right)  \left(  3+4k+12\alpha+4n\right)
}{\left(  2+4k+10\alpha+4n\right)  \left(  3+4k+12\alpha\right)  }F_{2}\left(
n,k\right)  ,
\end{gather*}
where $F_{2}$ is given in (\ref{F2p}) and (\ref{F2fact}).

The conjectured formula can be written as
\begin{align*}
& \frac{\left\langle \left(  x_{1}x_{2}x_{3}\right)  ^{2k}x_{4}^{2k-2}\left(
f_{1}\left(  x^{\prime}\right)  +x_{4}^{2}f_{2}\left(  x^{\prime}\right)
\right)  ^{n}\right\rangle }{\left\langle \left(  x_{1}x_{2}x_{3}\right)
^{2k}x_{4}^{2k-2}\right\rangle }\\
& =\left(  -1\right)  ^{n}\frac{\left(  \alpha\right)  _{n}\left(
\alpha+\frac{1}{2}\right)  _{n}\left(  n+2k+1+5\alpha\right)  _{n}\left(
2+10\alpha+4k+3n\right)  }{2^{4n}\left(  k+3\alpha+\frac{3}{2}\right)
_{n}\left(  2k+6\alpha+\frac{3}{2}\right)  _{2n}\left(  2+10\alpha
+4k+2n\right)  }\\
& \times~_{4}F_{3}\left(
\genfrac{}{}{0pt}{}{-\frac{n}{2},\frac{1-n}{2},k+1+\alpha,k+1+2\alpha
}{1-n-\alpha,\frac{1}{2}-n-\alpha,n+2k+2+5\alpha}%
;1\right)  .
\end{align*}

\begin{corollary}
The ($x_{4}=0$)-conditional expectation%
\begin{align*}
\left\langle f_{1}\left(  x^{\prime}\right)  ^{n}\right\rangle  &
=\left\langle \left\vert \rho_{0}^{PT}\right\vert ^{n}\right\rangle \\
& =\left(  -1\right)  ^{n}\frac{\left(  \alpha\right)  _{n}\left(
\alpha+\frac{1}{2}\right)  _{n}\left(  n+1+5\alpha\right)  _{n}\left(
2+10\alpha+3n\right)  }{2^{4n}\left(  3\alpha+\frac{3}{2}\right)  _{n}\left(
6\alpha+\frac{3}{2}\right)  _{2n}\left(  2+10\alpha+2n\right)  }\\
& \times~_{4}F_{3}\left(
\genfrac{}{}{0pt}{}{-\frac{n}{2},\frac{1-n}{2},1+\alpha,1+2\alpha
}{1-n-\alpha,\frac{1}{2}-n-\alpha,n+2+5\alpha}%
;1\right)  .
\end{align*}

\end{corollary}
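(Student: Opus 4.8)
The plan is to derive the Corollary from the Conjecture by specializing the latter to $k=0$, which I claim encodes precisely the minimally degenerate ($x_4=0$) conditional density. Writing $y_j=x_j^2$ (resp.\ $|x_j|^2$), the base Hilbert--Schmidt law on the Cholesky factors is, by (\ref{monom}), the Dirichlet distribution with parameter vector $(1+3\alpha,\,1+2\alpha,\,1+\alpha,\,1,\,\alpha,\ldots,\alpha)$. Multiplying by the weight $(x_1x_2x_3)^{2k}x_4^{2k-2}=(y_1y_2y_3)^{k}y_4^{k-1}$ and renormalizing tilts this to the Dirichlet law $\mu_k$ with parameters $(1+3\alpha+k,\,1+2\alpha+k,\,1+\alpha+k,\,k,\,\alpha,\ldots,\alpha)$ and total mass $3+12\alpha+4k$. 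As $k\to0^{+}$ the $y_4$-parameter tends to $0$, so $\mu_k$ concentrates on $y_4=0$ and converges to the conditional density (\ref{densx40}); indeed $\mu_0$ has exactly the parameters $(1+3\alpha,\,1+2\alpha,\,1+\alpha,\,0,\,\alpha,\ldots,\alpha)$ appearing there. This is the same Dirichlet $u\to0^{+}$ passage already carried out above, now specialized to the last coordinate.

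Next I would use the decomposition $|\rho^{PT}|-|\rho|=f_1(x')+x_4^{2}f_2(x')$ and expand
\[
\bigl(f_1(x')+x_4^{2}f_2(x')\bigr)^{n}=\sum_{j=0}^{n}\binom{n}{j}f_1(x')^{\,n-j}f_2(x')^{\,j}\,x_4^{2j}.
\]
Writing $E_k$ for expectation under $\mu_k$, the $j$-th term contributes $\binom{n}{j}E_k\bigl[y_4^{\,j}f_1^{\,n-j}f_2^{\,j}\bigr]$. Since $E_k[y_4^{\,j}]=(k)_j/(3+12\alpha+4k)_j\to0$ for every $j\ge1$ as $k\to0^{+}$, while $f_1,f_2$ are polynomials, hence bounded on the compact unit sphere, every term with $j\ge1$ vanishes in the limit. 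The surviving $j=0$ term gives $E_k[f_1^{\,n}]\to\langle f_1(x')^{n}\rangle_{x_4=0}=\langle|\rho_0^{PT}|^{n}\rangle$, because $f_1(x')=|\rho_0^{PT}|$ on $\Omega_0^{C}$. Hence
\[
\langle|\rho_0^{PT}|^{n}\rangle=\lim_{k\to0^{+}}E_k\bigl[(|\rho^{PT}|-|\rho|)^{n}\bigr].
\]

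It remains to evaluate this limit via the Conjecture. The key structural observation is that both sides are rational functions of $k$: once expanded through the monomial formulas (\ref{det1})--(\ref{det4}), the left-hand side is a finite combination of ratios of Pochhammer symbols of fixed integer lengths $n_i$ and $n$, while the right-hand side, call it $R(n,k,\alpha)$, is the explicit product times a terminating ${}_4F_3$, all of whose parameters are affine in $k$. Two rational functions of $k$ that agree at $k=1,2,3,\ldots$ agree identically, so the Conjecture yields $E_k[(|\rho^{PT}|-|\rho|)^{n}]=R(n,k,\alpha)$ for all $k$; and $R(n,k,\alpha)$ is regular at $k=0$, its denominator factors $(3\alpha+\tfrac32)_n$, $(6\alpha+\tfrac32)_{2n}$ and $(2+10\alpha+2n)$ being nonzero for $\alpha>0$. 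Thus the limit equals $R(n,0,\alpha)$, and substituting $k=0$ turns $(n+2k+1+5\alpha)_n$, $(2+10\alpha+4k+3n)$, $(k+3\alpha+\tfrac32)_n$, $(2k+6\alpha+\tfrac32)_{2n}$, $(2+10\alpha+4k+2n)$, together with the upper parameters $k+1+\alpha,\,k+1+2\alpha$ and lower parameter $n+2k+2+5\alpha$ of the ${}_4F_3$, into exactly the expressions asserted in the Corollary.

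The main obstacle is the passage from integer $k$, where the Conjecture is asserted, to a neighborhood of $k=0$. Granting the Conjecture, the rationality-and-interpolation argument above resolves it; the delicate point is that, although the weighted integral $\langle(x_1x_2x_3)^{2k}x_4^{2k-2}\rangle$ itself diverges at $k=0$ (the exponent $2k-2\to-2$ is non-integrable against the $\alpha_4=1$ factor), the \emph{normalized} ratio $E_k$ stays regular there, the divergences cancelling---precisely the mechanism that makes (\ref{densx40}) well defined. An alternative that avoids interpolation is to re-run the Appendix~A derivation of $F_2(n,k)$ with $k$ a continuous real parameter, so that $R(n,k,\alpha)=E_k[\,\cdot\,]$ holds on an interval containing $0$ and the Corollary follows by direct continuity at $k=0$.
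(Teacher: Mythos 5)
Your proposal is correct and follows essentially the same route as the paper's own proof: both identify the $k\rightarrow0_{+}$ limit of the weighted-expectation ratio built from (\ref{det3})--(\ref{det4}) with the ($x_{4}=0$)-conditional expectation (\ref{densx40}), both kill every term carrying a factor of $x_{4}$ (your bound $E_{k}\left[ y_{4}^{j}\right] =\left( k\right)_{j}/\left( 3+12\alpha+4k\right)_{j}\rightarrow0$ is the paper's ``property of $\left( 0\right)_{n_{4}}$''), and both then read off the Corollary as the $k=0$ evaluation of the Conjecture's ratio form. The one place you go beyond the paper is the explicit rational-function interpolation argument extending the Conjecture from $k=1,2,3,\ldots$ to a neighborhood of $k=0$; the paper leaves that bridge implicit (its monomial-ratio formula is simply treated as valid for real $k>0$), so your version makes rigorous a step the authors glossed over.
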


\begin{proof}
From (\ref{det3}) and (\ref{det4}), for a monomial $p$ we have%
\[
\frac{\left\langle \left(  x_{1}x_{2}x_{3}\right)  ^{2k}x_{4}^{2k-2}%
p\right\rangle }{\left\langle \left(  x_{1}x_{2}x_{3}\right)  ^{2k}%
x_{4}^{2k-2}\right\rangle }=\frac{\left(  k+1+3\alpha\right)  _{n_{1}}\left(
k+1+2\alpha\right)  _{n_{2}}\left(  k+1+\alpha\right)  _{n_{3}}\left(
k\right)  _{n_{4}}}{\left(  3+4k+12\alpha\right)  _{n}}\prod_{j=5}^{10}\left(
\alpha\right)  _{n_{j}},
\]
with the same conditions on $\left\{  n_{i}\right\}  $ as before. The limit of
this expression as $k\rightarrow0_{+}$ equals the ($x_{4}=0$)-conditional
expectation $\left\langle p\right\rangle $. From the property of $\left(
0\right)  _{n_{4}}$ it follows that $\left\langle \left(  f_{1}\left(
x^{\prime}\right)  +x_{4}^{2}f_{2}\left(  x^{\prime}\right)  \right)
^{n}\right\rangle =\left\langle f_{1}\left(  x^{\prime}\right)  ^{n}%
\right\rangle $.
\end{proof}

The conjecture has been verified (by symbolic computation) for $n=1,2,3,4$,
$\alpha=\frac{1}{2}$ and $\alpha=1$. (One notes that the $n=4,\alpha=1$
computation involves around 8000 monomials with a nonzero integral, and
roughly 4 million monomials with zero integral. Each of these monomials is of
degree 32 in 16 variables.)\\
\bigskip
{\bf{Appendix C: Equal probabilities}}

For a probability density supported on a bounded interval $I$ let $\mu
_{n}\left[  f\right]  :=\int_{I}x^{n}f\left(  x\right)  dx,$ for
$n=0,1,2,\ldots$. The following is the probability density for the random
variable $XY$ where the density function of $X$ is $f\left(  x\right)  $, the
density of $Y$ is $\gamma y^{\gamma-1}$ on $0<y<1,$ and $X,Y$ are independent.
It is obvious that $\Pr\left\{  XY>0\right\}  =\Pr\left\{  X>0\right\}  $.

\begin{definition}
Suppose $f\left(  x\right)  $ is a probability density function supported on
$\left[  a,b\right]  $ with $a<0<b$, and $\gamma>0$, then%
\[
M\left(  \gamma\right)  f\left(  x\right)  :=\left\{
\begin{array}
[c]{c}%
\gamma x^{\gamma-1}\int_{x}^{b}f\left(  t\right)  t^{-\gamma}dt,~0<x<b\\
\gamma\left\vert x\right\vert ^{\gamma-1}\int_{a}^{x}f\left(  t\right)
\left\vert t\right\vert ^{-\gamma}dt,~a<x<0.
\end{array}
\right.
\]

\end{definition}

\begin{proposition}
\label{Mgmf}Suppose $f,\gamma$ are as in the definition, and $n=0,1,2,3,\ldots
$ then $M\left(  \gamma\right)  f$ is a probability density such that%
\begin{align*}
\int_{0}^{b}x^{n}M\left(  \gamma\right)  f\left(  x\right)  dx  &
=\frac{\gamma}{\gamma+n}\int_{0}^{b}x^{n}f\left(  x\right)  dx,\\
\int_{a}^{0}x^{n}M\left(  \gamma\right)  f\left(  x\right)  dx  &
=\frac{\gamma}{\gamma+n}\int_{a}^{0}x^{n}f\left(  x\right)  dx,\\
\mu_{n}\left[  M\left(  \gamma\right)  f\right]    & =\frac{\gamma}{\gamma
+n}\mu_{n}\left[  f\right]  .
\end{align*}

\end{proposition}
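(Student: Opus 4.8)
The plan is to establish the two signed moment identities directly, since the third (for $\mu_n$) is merely their sum, and the claim that $M(\gamma)f$ is a probability density then follows at once: nonnegativity is manifest from the defining formula (on $0<x<b$ both $\gamma x^{\gamma-1}$ and the integrand $f(t)t^{-\gamma}$ are nonnegative, and symmetrically on $a<x<0$), while total mass $1$ comes from setting $n=0$ in the $\mu_n$ identity, giving $\mu_0[M(\gamma)f]=\tfrac{\gamma}{\gamma}\mu_0[f]=1$. The conceptual picture guiding the whole argument is the one already stated just before the Definition: $M(\gamma)f$ is the law of the product $Z=XY$, with $X\sim f$, $Y$ having density $\gamma y^{\gamma-1}$ on $(0,1)$, and $X,Y$ independent. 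Since $Y>0$ almost surely, $Z$ has the same sign as $X$ and $Z^n\mathbf{1}_{\{Z>0\}}=X^nY^n\mathbf{1}_{\{X>0\}}$; independence then yields $E[Z^n\mathbf{1}_{\{Z>0\}}]=E[X^n\mathbf{1}_{\{X>0\}}]\,E[Y^n]$ with $E[Y^n]=\gamma/(\gamma+n)$, which is exactly the first identity. I would nevertheless give a self-contained analytic verification.

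For the positive-axis identity I would substitute the upper branch of the definition,
\[
\int_0^b x^n\,M(\gamma)f(x)\,dx=\gamma\int_0^b x^{n+\gamma-1}\Bigl(\int_x^b f(t)\,t^{-\gamma}\,dt\Bigr)dx,
\]
and interchange the order of integration over the triangle $\{0<x<t<b\}$. The inner integral is elementary, $\int_0^t x^{n+\gamma-1}\,dx=t^{n+\gamma}/(n+\gamma)$, and the surviving factor $t^{-\gamma}$ cancels the surplus power of $t$, leaving $\tfrac{\gamma}{\gamma+n}\int_0^b t^n f(t)\,dt$, as claimed. The negative-axis identity is handled identically over the triangle $\{a<t<x<0\}$; the only genuinely fiddly point is the sign bookkeeping, where on $(a,0)$ one has $|x|=-x$, so the inner integral $\int_t^0 x^n|x|^{\gamma-1}\,dx$ is best evaluated via $u=-x$, producing $(-1)^n|t|^{n+\gamma}/(n+\gamma)$, after which the identity $(-1)^n|t|^n=t^n$ for $t<0$ restores $\int_a^0 t^n f(t)\,dt$ on the right. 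Adding the two identities gives the $\mu_n$ statement.

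I expect the main (really the only) obstacle to be justifying the interchange of the order of integration. For the nonnegative integrand $|x|^n\,M(\gamma)f(x)$ Tonelli applies unconditionally and already shows the double integral is finite; in particular the $n=0$ case shows $\int_a^b M(\gamma)f=1<\infty$, so $M(\gamma)f\in L^1$. With finiteness in hand, Fubini legitimizes the swap for the signed integrand that arises when $n$ is odd on the negative axis. The one piece of regularity worth flagging is that $\gamma>0$ makes $x^{n+\gamma-1}$ integrable at the origin, so no boundary term is incurred there even though $M(\gamma)f$ may itself be unbounded as $x\to 0$.
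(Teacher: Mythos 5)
Your proof is correct and follows essentially the same route as the paper's: interchange the order of integration on each half-interval, evaluate the elementary inner integral $\int_0^t x^{n+\gamma-1}\,dx$, handle the negative axis via the substitution $x=-u$ with the identity $(-1)^n\left\vert t\right\vert^n=t^n$ for $t<0$, and sum the two identities to get the $\mu_n$ statement. Your added justification of the interchange via Tonelli/Fubini and the explicit check that $M(\gamma)f$ is a probability density are sound refinements of details the paper leaves implicit, not a different argument.
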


\begin{proof}
In the following iterated integral change the order of integration and then
evaluate the inner integral:%
\begin{align*}
\int_{0}^{b}x^{n}M\left(  \gamma\right)  f\left(  x\right)  dx  & =\gamma
\int_{0}^{b}x^{n+\gamma-1}dx\int_{x}^{b}f\left(  t\right)  t^{-\gamma
}dt=\gamma\int_{0}^{b}f\left(  t\right)  t^{-\gamma}dt\int_{0}^{t}%
x^{n+\gamma-1}dx\\
& =\frac{\gamma}{\gamma+n}\int_{0}^{b}f\left(  t\right)  t^{-\gamma+n+\gamma
}dt.
\end{align*}
Similarly%
\begin{align*}
\int_{a}^{0}x^{n}M\left(  \gamma\right)  f\left(  x\right)  dx  & =\gamma
\int_{a}^{0}x^{n}\left\vert x\right\vert ^{\gamma-1}dx\int_{a}^{x}f\left(
t\right)  \left\vert t\right\vert ^{-\gamma}dt\\
& =\gamma\int_{a}^{0}f\left(  t\right)  \left\vert t\right\vert ^{-\gamma
}dt\int_{t}^{0}x^{n}\left\vert x\right\vert ^{\gamma-1}dx;
\end{align*}
in the inner integral change the variable $x=-u$ and then integrate to obtain
$\left(  -1\right)  ^{n}\dfrac{\left(  -t\right)  ^{n+\gamma}}{\gamma
+n}=\dfrac{t^{n}\left\vert t\right\vert ^{\gamma}}{\gamma+n}$ for $a<t<0$.
This establishes the first two equations and the sum of the two shows $\mu
_{n}\left[  M\left(  \gamma\right)  f\right]  =\frac{\gamma}{\gamma+n}\mu
_{n}\left[  f\right]  $.
\end{proof}

\begin{proposition}
\label{Mdelg}Suppose $f$ is a density function on $\left[  a,b\right]  $ with
$a<0<b,$ $\gamma>0$ and $0<\delta<1$, then $g\left(  x\right)  :=\delta
f\left(  x\right)  +\left(  1-\delta\right)  M\left(  \gamma\right)  f\left(
x\right)  $ is a density function on $\left[  a,b\right]  $ and%
\begin{align*}
\int_{0}^{b}x^{n}g\left(  x\right)  dx  & =\frac{\gamma+\delta n}{\gamma
+n}\int_{0}^{b}x^{n}f\left(  x\right)  dx,\\
\int_{a}^{0}x^{n}g\left(  x\right)  dx  & =\frac{\gamma+\delta n}{\gamma
+n}\int_{a}^{0}x^{n}f\left(  x\right)  dx,\\
\mu_{n}\left[  g\right]    & =\frac{\gamma+\delta n}{\gamma+n}\mu_{n}\left[
f\right]  .
\end{align*}

\end{proposition}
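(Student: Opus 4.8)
The plan is to reduce everything to Proposition \ref{Mgmf} by exploiting the linearity of each of the three integral functionals in the statement, together with the fact that $g$ is a convex combination of two densities. No genuinely new analysis is required.

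First I would check that $g$ is a bona fide probability density. Because $0<\delta<1$, the function $g=\delta f+(1-\delta)M(\gamma)f$ is a convex combination of the nonnegative functions $f$ and $M(\gamma)f$; the latter is nonnegative directly from the definition of $M(\gamma)$, since $f\ge 0$ makes both branch integrals nonnegative while the prefactors $\gamma x^{\gamma-1}$ and $\gamma\left\vert x\right\vert^{\gamma-1}$ are nonnegative for $\gamma>0$. Hence $g\ge 0$, and integrating gives $\int_a^b g=\delta\int_a^b f+(1-\delta)\int_a^b M(\gamma)f=\delta+(1-\delta)=1$, where $\int_a^b M(\gamma)f=1$ is the $n=0$ instance of Proposition \ref{Mgmf}.

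Next, for each functional I would apply linearity and insert the scaling law of Proposition \ref{Mgmf}. For the positive part, linearity and Proposition \ref{Mgmf} give
\[
\int_0^b x^n g\,dx=\delta\int_0^b x^n f\,dx+(1-\delta)\frac{\gamma}{\gamma+n}\int_0^b x^n f\,dx=\Bigl[\delta+(1-\delta)\tfrac{\gamma}{\gamma+n}\Bigr]\int_0^b x^n f\,dx,
\]
and the computations for $\int_a^0$ and for $\mu_n$ are word-for-word identical. The only remaining step is the one-line simplification of the common bracket,
\[
\delta+(1-\delta)\frac{\gamma}{\gamma+n}=\frac{\delta(\gamma+n)+(1-\delta)\gamma}{\gamma+n}=\frac{\gamma+\delta n}{\gamma+n},
\]
which delivers all three asserted identities simultaneously.

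There is essentially no obstacle here: the proposition is a direct corollary of Proposition \ref{Mgmf}, the entire content beyond linearity being the algebraic identity above. The point worth stressing is interpretive rather than technical. Setting $\delta=1$ recovers $f$ and $\delta=0$ recovers $M(\gamma)f$, and the factor $\frac{\gamma+\delta n}{\gamma+n}$ interpolates between $1$ and $\frac{\gamma}{\gamma+n}$; crucially its value at $n=0$ equals $1$ for every admissible pair $\gamma,\delta$, so the masses $\int_0^b g$ and $\int_a^0 g$ coincide with $\int_0^b f$ and $\int_a^0 f$ respectively. Thus these convex combinations rescale all higher moments by a prescribed rational-in-$n$ factor while leaving the sign-split probabilities invariant --- precisely the mechanism by which a moment-ratio of the type encountered in the minimally-degenerate comparison (cf. (\ref{fullygeneral})) forces equality of the two positivity (separability) probabilities.
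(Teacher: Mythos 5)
Your proposal is correct and follows essentially the same route as the paper's own proof: linearity of the integral, the scaling law from Proposition \ref{Mgmf}, and the one-line simplification $\delta+(1-\delta)\frac{\gamma}{\gamma+n}=\frac{\gamma+\delta n}{\gamma+n}$. The only difference is cosmetic --- you spell out why $g$ is a density (nonnegativity plus the $n=0$ case of Proposition \ref{Mgmf}), which the paper dismisses as ``clear.''
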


\begin{proof}
It is clear that $g$ is a density. The other claims follow from the previous
proposition. For example%
\begin{align*}
\int_{0}^{b}x^{n}g\left(  x\right)  dx  & =\delta\int_{0}^{b}x^{n}f\left(
x\right)  dx+\left(  1-\delta\right)  \int_{0}^{b}x^{n}M\left(  \gamma\right)
f\left(  x\right)  dx\\
& =\left\{  \delta+\left(  1-\delta\right)  \frac{\gamma}{\gamma+n}\right\}
\int_{0}^{b}x^{n}f\left(  x\right)  dx\\
& =\frac{\gamma+\delta n}{\gamma+n}\int_{0}^{b}x^{n}f\left(  x\right)  dx.
\end{align*}

\end{proof}

Now fix $\alpha$ ($=\frac{1}{2},1$ for the real and complex cases) and denote
the density function of $\left(  \left\vert \rho^{PT}\right\vert -\left\vert
\rho\right\vert \right)  $ by $f\left(  x\right)  $, supported on $\left[
-\frac{1}{16},\frac{1}{432}\right]  $, or for arbitrary $\alpha>0$ take the
density function whose $n^{th}$ moments are given by $F_{2}\left(  n,0\right)
$ (see equations \ref{F2p} and \ref{F2fact} ); also denote the density
function of $\left\vert \rho_{0}^{PT}\right\vert $ in the minimally degenerate
setting by $g\left(  x\right)  $ (more generally the density whose moments are
given by $\dfrac{\left(  2+10\alpha+3n\right)  \left(  3+12\alpha+4n\right)
}{\left(  2+10\alpha+4n\right)  \left(  3+12\alpha\right)  }F_{2}\left(
n,0\right)  $).

Let $U_{1}$ denote the random variable with density $M\left(  3\alpha+\frac
{3}{4}\right)  g\left(  x\right)  $, then by Proposition \ref{Mgmf} the range
of $U_{1}$ is $\left[  -\frac{1}{16},\frac{1}{432}\right]  $ and%
\begin{align*}
\mu_{n}\left[  M\left(  3\alpha+\frac{3}{4}\right)  g\right]    &
=\frac{3\alpha+\frac{3}{4}}{3\alpha+\frac{3}{4}+n}\mu_{n}\left[  g\right]
=\frac{3+12\alpha}{3+12\alpha+4n}\mu_{n}\left[  g\right]  ,~n=0,1,2,3,\ldots\\
\Pr\left\{  U_{1}>0\right\}    & =\Pr\left\{  \left\vert \rho_{0}%
^{PT}\right\vert >0\right\}  .
\end{align*}
Let $U_{2}$ denote the random variable with density $h\left(  x\right)
:=\frac{3}{4}f\left(  x\right)  +$ $\frac{1}{4}M\left(  \frac{1+5\alpha}%
{2}\right)  f\left(  x\right)  $, then by Proposition \ref{Mdelg} the range of
$U_{2}$ is $\left[  -\frac{1}{16},\frac{1}{432}\right]  $ and%
\begin{align*}
\mu_{n}\left[  h\right]    & =\frac{\frac{1+5\alpha}{2}+\frac{3n}{4}}%
{\frac{1+5\alpha}{2}+n}\mu_{n}\left[  f\right]  =\frac{2+10\alpha
+3n}{2+10\alpha+4n}\mu_{n}\left[  f\right]  ,~n=0,1,2,3,\ldots,\\
\Pr\left\{  U_{2}>0\right\}    & =\Pr\left\{  \left\vert \rho^{PT}\right\vert
-\left\vert \rho\right\vert >0\right\}  .
\end{align*}

\begin{proposition}
Suppose $\alpha>0$ then $\Pr\left\{  \left\vert \rho^{PT}\right\vert
-\left\vert \rho\right\vert >0\right\}  =\Pr\left\{  \left\vert \rho_{0}%
^{PT}\right\vert >0\right\}  $.
\end{proposition}

\begin{proof}
From the conjecture it follows that $\mu_{n}\left[  M\left(  3\alpha+\frac
{3}{4}\right)  g\right]  =\mu_{n}\left[  h\right]  $ for all $n$. By the
uniqueness of moments (on bounded intervals) $U_{1}$ and $U_{2}$ have the same
density and $\Pr\left\{  \left\vert \rho^{PT}\right\vert -\left\vert
\rho\right\vert >0\right\}  =\Pr\left\{  U_{2}>0\right\}  =\Pr\left\{
\left\vert \rho_{0}\right\vert ^{PT}>0\right\}  $.
\end{proof}
\bigskip
{\bf{Appendix D: Specific moments for $6 \times 6$ Hilbert-Schmidt and $4 \times 4$ Bures scenarios}}   \\
{{\bf{$6 \times 6$ Hilbert-Schmidt moments}} \\
{{\bf{$n = 1$}} \\
\begin{equation} \label{rebitretrit}
\frac{\left\langle |\rho|^k (|\rho^{PT}| -|\rho|) \right\rangle_{rebit-retrit/HS}}{\left\langle |\rho|^k \right\rangle_{rebit-retrit/HS}}= 
 -\frac{5 (k+2) (k+3) (2 k+7)}{96 (k+4) (3 k+11) (3 k+13) (6 k+23) (6 k+25)},
\end{equation}
\begin{equation} \label{qubitqutrit}
\frac{\left\langle |\rho|^k (|\rho^{PT}|-|\rho|)  \right\rangle_{qubit-qutrit/HS}}{\left\langle |\rho|^k \right\rangle_{qubit-qutrit/HS}}=  -\frac{(k+3) (k+5) (2 k+11)}{3 (2 k+13) (3 k+19) (3 k+20) (6 k+37) (6 k+41)},
\end{equation}
and
\begin{equation} \label{quaterbitqutrit}
\frac{\left\langle |\rho|^k (|\rho^{PT}|-|\rho|)  \right\rangle_{quaterbit-quatertrit/HS}}{\left\langle |\rho|^k \right\rangle_{quaterbit-quatertrit/HS}}= 
-\frac{5 (k (k (3 k+70)+521)+1194)}{6 (2 k+23) (3 k+34) (3 k+35) (6 k+67) (6
   k+71)}.
\end{equation}
{{\bf{$n = 2$}} \\
\begin{equation} \label{rebitretrit2}
\frac{\left\langle |\rho|^k (|\rho^{PT}| -|\rho|)^2 \right\rangle_{rebit-retrit/HS}}{\left\langle |\rho|^k \right\rangle_{rebit-retrit/HS}}= 
\end{equation}
\begin{displaymath}
\frac{5 \left(24 k^6+900 k^5+10974 k^4+63561 k^3+193602 k^2+302033
   k+192132\right)}{82944 (k+5) (3 k+11) (3 k+13) (3 k+14) (3 k+16) (6 k+23) (6
   k+25) (6 k+29) (6 k+31)}
\end{displaymath}
and
\begin{equation} \label{qubitqutrit2}
\frac{\left\langle |\rho|^k (|\rho^{PT}| -|\rho|)^2 \right\rangle_{qubit-qutrit/HS}}{\left\langle |\rho|^k \right\rangle_{qubit-qutrit/HS}}= 
\end{equation}
\begin{displaymath}
\frac{k^6+69 k^5+1315 k^4+11475 k^3+51964 k^2+119856 k+112680}{108 (2 k+15) (3
   k+19) (3 k+20) (3 k+22) (3 k+23) (6 k+37) (6 k+41) (6 k+43) (6 k+47)}.
\end{displaymath}
{{\bf{$n = 3$}} \\
\begin{equation} \label{rebitretrit3}
\frac{\left\langle |\rho|^k (|\rho^{PT}| -|\rho|)^3 \right\rangle_{rebit-retrit/HS}}{\left\langle |\rho|^k \right\rangle_{rebit-retrit/HS}}= \frac{-35 A}{663552B}
\end{equation}
where
\begin{displaymath}
A=24 k^8+976 k^7+16438 k^6+152052 k^5+852799 k^4+2987211 k^3+
\end{displaymath}
\begin{displaymath}
6400915 k^2+7669535
   k+3920730
\end{displaymath}
and
\begin{displaymath}
B=(k+6) (3 k+11) (3 k+13) (3 k+14) (3 k+16) (3 k+17)
\end{displaymath}
\begin{displaymath}
 (3 k+19) (6 k+23) (6 k+25) (6
   k+29) (6 k+31) (6 k+35) (6 k+37).
\end{displaymath}
{{\bf{$4 \times 4$ Bures moments}} \\
{{\bf{$n = 1$}} \\
\begin{equation} \label{BuresTwoRebits}
\frac{\left\langle |\rho|^k (|\rho^{PT}|-|\rho|)  \right\rangle_{two-rebit/Bures}}{\left\langle |\rho|^k \right\rangle_{two-rebit/Bures}}= -\frac{k (16 k (k+7)+245)+173}{128 (k+2)^2 (2 k+3) (2 k+5) (2 k+7)}
\end{equation}
and
\begin{equation} \label{BuresTwoQubits}
\frac{\left\langle |\rho|^k (|\rho^{PT}|-|\rho|)  \right\rangle_{two-qubit/Bures}}{\left\langle |\rho|^k \right\rangle_{two-qubit/Bures}}= -\frac{3 (2 k (16 k (k+10)+499)+1005)}{128 (k+3) (k+4) (k+5) (4 k+9) (4 k+11)}.
\end{equation}
{{\bf{$n = 2$}} \\
\begin{equation} \label{BuresTwoRebits2}
\frac{\left\langle |\rho|^k (|\rho^{PT}|-|\rho|)^2  \right\rangle_{two-rebit/Bures}}{\left\langle |\rho|^k \right\rangle_{two-rebit/Bures}}= \frac{A}{4194304 B}
\end{equation}
where
\begin{displaymath}
A=4182016 k^{10}+87822336 k^9+745901568 k^8+3257689088 k^7+8000700112
   k^6+16462195504 k^5+
\end{displaymath}
\begin{displaymath}
65217922488 k^4+254319857272 k^3+570485963797
   k^2+660408583199 k+311220769578\end{displaymath}
and
\begin{displaymath}
B=(k+2)^2 (k+3)^2 (2 k+3) (2 k+5)^2 (2 k+7) (2 k+9) (2 k+11).
\end{displaymath}

For each of the nine  (six Hilbert-Schmidt and three Bures) results above, one can perform a transformation $k \rightarrow (k+r)$, where $r$ is a simple rational number, so
that the coefficient of the second-highest power in the numerator becomes zero. (As throughout our paper $\alpha =\frac{1}{2}$,  1 and 2 denote real, complex and 
quaternionic scenarios, respectively.)
For the Hilbert-Schmidt cases, we have $r =\frac{17}{6}$  $[n = 1, \alpha = \frac{1}{2}]$;
$r = \frac{9}{2}$  $ [n = 1, \alpha =1]$; $r = \frac{70}{9}$  $ [n = 1, \alpha =2]$;  $r = \frac{25}{4}$  $ [n = 2, \alpha =\frac{1}{2}]$; $r= \frac{23}{2} $  $[n =2, \alpha = 1]$; and $r = \frac{61}{12}$  $[n =3, \alpha= \frac{1}{2}]$. For the three Bures cases: $r= \frac{7}{3}$ $[n =1, \alpha = \frac{1}{2}]$; $r =\frac{10}{3}$ $[n=1, \alpha =1]$; and $r = \frac{21}{10}$ $[n =2, \alpha = \frac{1}{2}]$.

A formula that fits the first two equations ((\ref{rebitretrit}) and (\ref{qubitqutrit}))
in this appendix is
\begin{equation}
-\frac{4 \alpha  (\alpha +1) (\alpha +2) (2 \alpha +k+1) (4 \alpha +k+1) (5
   \alpha +k+1) (8 \alpha +2 k+3)}{(30 \alpha +6 k+6) (30 \alpha +6 k+7) (30
   \alpha +6 k+8) (30 \alpha +6 k+9) (30 \alpha +6 k+10) (30 \alpha +6 k+11)}.
\end{equation}
However, for $\alpha =2$, this formula yields
\begin{equation}
-\frac{4 (k+5) (k+9) (2 k+19)}{3 (2 k+23) (3 k+34) (3 k+35) (6 k+67) (6 k+71)},
\end{equation}
and not (\ref{quaterbitqutrit}).
\section{Appendix E. A symmetry property of separable states}
Let $\Omega$ denote the set of $2N\times2N$ Hermitian matrices, with entries
in $\mathbb{R},\mathbb{C}$, or $\mathbb{H}$, such that $M\in\Omega$ implies:

\begin{itemize}
\item $M_{ii}\geq0$ for $1\leq i\leq2N$, and $\sum_{i=1}^{2N}M_{ii}=1$;

\item $\left\vert M_{ij}\right\vert \leq1$ for $1\leq i,j\leq2N.$
\end{itemize}

Consider $\Omega$ as a compact (closed bounded) subset of $\mathbb{R}^{d}$
where $d=N\left(  2N+1\right)  $, $4N^{2}$, $2N\left(  4N-1\right)  $ for the
fields $\mathbb{R},\mathbb{C}$, $\mathbb{H}$ respectively, and furnish
$\Omega$ with the standard Euclidean (Lebesgue) measure $\mu$. This is
equivalent to the Hilbert-Schmidt measure.

Let $\sigma$ denote the operation of partial transposition. As an action on
the subset $\Omega$ of $\mathbb{R}^{d}$ it permutes some coordinates and
changes the sign of some other coordinates (for example $M_{12}$ is replaced
by $M_{21}=\overline{M_{12}}$, that is $\left(  \operatorname{Re}%
M_{12},\operatorname{Im}M_{12}\right)  \rightarrow\left(  \operatorname{Re}%
M_{12},-\operatorname{Im}M_{12}\right)  $). Thus $\sigma$ is an isometry (a
congruence relation) and preserves measure.

Let $P:=\left\{  M\in\Omega:M\geq0\right\}  $ (positive semi-definite).
Consider (for $M\in P$)%
\[
\Pr\left\{  \sigma M\geq0\right\}  =\frac{\mu\left(  P\cap\sigma P\right)
}{\mu\left(  P\right)  }.
\]
Now suppose $M\in P\cap\sigma P$ what can be deduced about $\Pr\left\{
\det\left(  \sigma M\right)  \geq\det M\right\}  $? Let%
\begin{align*}
E_{+}  & =\left\{  M\in P\cap\sigma P:\det\left(  \sigma M\right)  >\det
M\right\}  ,\\
E_{0}  & =\left\{  M\in P\cap\sigma P:\det\left(  \sigma M\right)  =\det
M\right\}  ,\\
E_{-}  & =\left\{  M\in P\cap\sigma P:\det\left(  \sigma M\right)  <\det
M\right\}  .
\end{align*}
Now $\sigma E_{+}=E_{-}$, $\sigma E_{-}=E_{+}$ and $\sigma E_{0}=E_{0}$ and
$P\cap\sigma P$ is the disjoint union of $E_{+},E_{0},E_{-}$ and $\mu\left(
P\cap\sigma P\right)  =\mu\left(  E_{+}\right)  +\mu\left(  E_{-}\right)
+\mu\left(  E_{0}\right)  $. The set $E_{0}$ is of lower dimension in
$\mathbb{R}^{d}$, thus $\mu\left(  E_{0}\right)  =0$. By the
measure-preserving property of $\sigma$ it follows that $\mu\left(
E_{-}\right)  =\mu\left(  \sigma E_{+}\right)  =\mu\left(  E_{+}\right)  $ and
hence $\mu\left(  E_{+}\right)  =\frac{1}{2}\mu\left(  P\cap\sigma P\right)
$, and
\[
\Pr\left\{  \det\left(  \sigma M\right)  \geq\det M\right\}  =\frac{\mu\left(
E_{+}\right)  }{\mu\left(  P\right)  }=\frac{1}{2}\frac{\mu\left(  P\cap\sigma
P\right)  }{\mu\left(  P\right)  }=\frac{1}{2}\Pr\left\{  \sigma
M\geq0\right\}  .
\]
As is known the condition $\sigma M\geq0$ is equivalent to $\det\left(  \sigma
M\right)  \geq0$ when $N=2,$ thus $M\in E_{+}$ is equivalent to $M\in P$ and
$\det\left(  \sigma M\right)  >\det M$.

\begin{acknowledgments}
PBS expresses appreciation to the Kavli Institute for Theoretical
Physics (KITP) for computational support in this research.
\end{acknowledgments}

\end{document}